\def\dd{\mathinner{.\,.}}
\newcommand{\Oh}{\mathcal{O}}
\DeclareMathOperator{\poly}{poly}
\newcommand{\set}[1]{\left\lbrace #1 \right\rbrace}
\let\Cref\cref
\newtheorem{theorem}{Theorem}[section]
\newtheorem{@theorem}{Theorem}[section]
\newtheorem{lemma}{Lemma}[section]
\crefname{conjecture}{Conjecture}{Conjectures}
\crefname{lemma}{Lemma}{Lemmas}
\crefname{problem}{Problem}{Problems}
\crefname{remark}{Remark}{Remarks}
\crefname{definition}{Definition}{Definitions}
\crefname{observation}{Observation}{Observations}
\crefname{@theorem}{Theorem}{Theorems}
\crefname{fact}{Fact}{Facts}
\crefname{claim}{Claim}{Claims} 
\title{Better Indexing for Rectangular Pattern Matching\footnote{Partially supported by the Polish National Science Centre grant number 2023/51/B/ST6/01505.}}
\author[1]{Paweł Gawrychowski}
\author[1]{Adam Górkiewicz}
\affil[1]{Institute of Computer Science, University of Wrocław, Poland}
\date{}
\begin{document}

\maketitle

\begin{abstract}
We revisit the complexity of building, given a two-dimensional string of size $n$, an indexing structure that allows
locating all $k$ occurrences of a two-dimensional pattern of size $m$. While a structure of size $\Oh(n)$ with query time $\Oh(m+k)$
is known for this problem under the additional assumption that the pattern is a square [Giancarlo, SICOMP 1995], a popular
belief was that for rectangular patterns one cannot achieve such (or even similar) bounds, due to a lower bound for a certain natural
class of approaches [Giancarlo, WADS 1993]. We show that, in fact, it is possible to construct
a very simple structure of size $\Oh(n\log n)$ that supports such queries for any rectangular pattern in $\Oh(m+k\log^{\varepsilon}n)$ time,
for any $\varepsilon>0$. Further, our structure can be constructed in $\tilde{\Oh}(n)$ time.
\end{abstract}

\section{Introduction}
In the area of algorithms on strings, two basic algorithmic questions are pattern matching and string indexing. In the former, we aim
to locate an occurrence of the pattern in the text, while in the latter the goal is to preprocess the text for multiple such queries.
The complexity of both problems is well understood, at least for the case of regular strings and exact occurrences. In particular,
a text of length $n$ can be indexed in $\mathcal{O}(n)$ space, so that all $k$ occurrences of a pattern of length $m$ can be retrieved
in $\Oh(m+k)$ time: this is a textbook application of suffix trees, already explained in the original article by Weiner~\cite{DBLP:conf/focs/Weiner73}.

In this paper, we consider two-dimensional strings, which are simply (two-dimensional) arrays of characters. Such a generalisation
is naturally motivated by possible applications in image processing, and the complexity of pattern matching for two-dimensional
strings has been already considered in the 70s~\cite{DBLP:journals/siamcomp/Baker78a,Bird1977}, and further investigated
in the early 90s~\cite{DBLP:journals/siamcomp/AmirBF94,DBLP:journals/siamcomp/GalilP96}. A common assumption in all of these
papers is that both the text and the pattern are square arrays, but this assumption is just to avoid clutter and can be removed without
encountering any technical difficulties.

To state the results concerning indexing two-dimensional strings, let the dimensions of the text be $H\times W$, with the total size $n=HW$,
and the dimensions of the pattern be $h\times w$, with the total size $m=hw$. Giancarlo~\cite{DBLP:journals/siamcomp/Giancarlo95} introduced the
LSuffix tree of a matrix, based on a linearisation of a two-dimensional string (similar concept has been also used by Amir and
Farach~\cite{DBLP:journals/ipl/AmirF92}). This allowed him to design an index of (asymptotically optimal) size $\Oh(n)$ that
supports queries in $\Oh(m)$ time (for simplicity, we restate the bounds for constant alphabets), but only if the pattern is a square
matrix. This assumption was in fact inherent to his approach, as otherwise it is not clear how to linearise the strings.
For the general case, he designed another structure, called the submatrix tree~\cite{DBLP:conf/wads/Giancarlo93}, of size
$\Oh(\min(H,W)n)$ that supports queries in $\Oh(m)$ time. In the same paper, he defined an abstract notion of an index for
a two-dimensional text, and proved that the size of any such index must be $\Omega(\min(H,W)n)$, making his
construction essentially optimal (perhaps up to a logarithmic factor). Consequently, subsequent work focused on the case of
square patterns~\cite{DBLP:journals/algorithmica/KimNSP11,DBLP:journals/ipl/ChoiL97,DBLP:journals/algorithmica/NaGP07,DBLP:journals/jc/GiancarloG99a,DBLP:journals/jc/GiancarloG99,DBLP:conf/spaa/GiancarloG93,DBLP:conf/icalp/GiancarloG95,DBLP:conf/ifipTCS/FredrikssonNU00}.

\paragraph{Our contribution.} We revisit the complexity of indexing a two-dimensional text for the general case of rectangular
patterns. We observe that, in fact, the abstract notion of an index, as defined by Giancarlo~\cite{DBLP:conf/wads/Giancarlo93},
is somewhat restricted: he defines what it means for one pattern to be a prefix of another, and then requires that the index
has a form of a compacted tree, with every node corresponding to some pattern occurring in the text, and the parent of each node
corresponding to its prefix. This is consistent with the notion of one-dimensional suffix trees, but suffix trees are not the only known
indexing structures. For example, suffix arrays~\cite{DBLP:journals/siamcomp/ManberM93}
use $\mathcal{O}(n)$ space and allow retrieving the occurrences in $\Oh(m+\log n+k)$ time, even though they are not based
on storing a compacted trie (although they are of course related to suffix arrays).
We show that a very simple idea allows us to obtain the following result.

\begin{restatable}{theorem}{restateMainTheorem}\label{main theorem}
For a two-dimensional text of size~$n$, there is an $\Oh(n \log n)$-space data structure that, given a two-dimensional pattern of size~$m$, reports all~$k$ occurrences of the pattern in the text in time~$\Oh(m + k \log^\varepsilon n)$, for any constant~$\varepsilon > 0$.
\end{restatable}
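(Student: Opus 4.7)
The plan is to build, for each scale $h \in \{1, 2, 4, \ldots\}$ up to $H$, a generalised suffix tree $T_h$ of the row meta-strings at scale $h$. For $1 \le r \le H-h+1$, the row meta-string $R^{(h)}_r$ has length $W$ and its $c$-th character is the $h$-tall text column starting at cell $(r,c)$. A preprocessing step assigns each $h$-tall text column an integer rank, in $\tilde{\Oh}(n)$ total time over all powers-of-two $h$ (by 2D suffix sorting or Karp--Rabin hashing), so that each $T_h$ is a classical generalised suffix tree over an integer alphabet with $\Oh(n)$ leaves and $\Oh(n)$ space. Summed over the $\Oh(\log n)$ scales this fits in the target $\Oh(n \log n)$ space. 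I would also store the inverse suffix array $\mathrm{ISA}_h$, so that any text position $(r,c)$ can be mapped to its $T_h$-leaf rank $\sigma_h(r,c)$ in constant time.

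To answer a query for a pattern $P$ of dimensions $h' \times w'$, set $h = 2^{\lfloor \log h' \rfloor}$ so that $h \le h' < 2h$, and $d = h' - h \in [0, h-1]$. I split $P$ into two $h$-tall horizontal strips: the top strip $P^T$ consisting of rows $1..h$ and the bottom strip $P^B$ consisting of rows $h' - h + 1..h'$. Then $(r,c)$ is an occurrence of $P$ if and only if $P^T$ occurs at $(r,c)$ and $P^B$ occurs at $(r+d,c)$. Descending $T_h$ with the column-rank sequence of $P^T$ and then of $P^B$ takes $\Oh(m)$ time and locates two contiguous leaf intervals $[l_T, r_T]$ and $[l_B, r_B]$ of $T_h$ containing all candidate positions. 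When $d = 0$ the two intervals coincide and a standard traversal reports all $k$ occurrences in $\Oh(m + k)$ total time.

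For $d \ge 1$ the remaining task is an orthogonal range reporting query: among positions $(r,c)$ with $\sigma_h(r,c) \in [l_T, r_T]$, report those for which $\sigma_h(r+d,c) \in [l_B, r_B]$. I would augment each $T_h$ with Chazelle's $\Oh(n)$-space, $\Oh(\log n + k \log^\varepsilon n)$-time 2D orthogonal range reporting structure, instantiated on point sets of the form $\{(\sigma_h(r,c), \sigma_h(r+\delta, c))\}$ for carefully chosen shifts $\delta$, so that an arbitrary $d$ resolves to a constant number of such range queries combined with a constant-time verification via $\mathrm{ISA}_h$. Combined with the $\Oh(m)$ descent, this would give the claimed $\Oh(m + k \log^\varepsilon n)$ query time.

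The main obstacle --- and the technical heart of the proof --- is to realise this augmentation within the $\Oh(n \log n)$ space budget. Naively building one point set per scale $h$ and per possible shift $\delta$ would cost $\Oh(nH)$ in total, and even restricting $\delta$ to powers of two costs $\Oh(n \log^2 n)$, exceeding the target by a logarithmic factor. I would aim to eliminate this extra logarithmic factor via a structural argument showing either (i) that a pattern of height in $[h, 2h)$ is already pinned by its first and last rows, so that at most one shift per scale needs an explicit auxiliary structure, or (ii) that the family of per-shift range-reporting structures admits a shared $\Oh(n)$-space encoding via persistence or fractional cascading across $\delta$, preserving $\Oh(\log^\varepsilon n)$ per-output query time while keeping total auxiliary storage at $\Oh(n)$ per scale.
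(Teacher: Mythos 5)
Your approach --- indexing by power-of-two pattern heights $h$ and matching a query of height $h' \in [h, 2h)$ via two overlapping $h$-tall strips --- is genuinely different from the paper's, and it is sound up to the point you yourself flag as the ``main obstacle.'' That obstacle is a real gap, and neither of your proposed escapes closes it. The condition ``$\sigma_h(r,c) \in [l_T,r_T]$ and $\sigma_h(r+d,c)\in[l_B,r_B]$'' is a joint constraint on the pair $(\sigma_h(r,c),\sigma_h(r+d,c))$ for one specific shift $d$; it does not factor across a binary decomposition of $d$, so restricting the precomputed shifts $\delta$ to powers of two (your $\Oh(n\log^2 n)$ variant) gives no way to answer a query with an arbitrary $d\in[1,h-1]$. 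Your idea (i), that the pattern is ``already pinned by its first and last rows'' so one shift per scale suffices, is false in general: the two strips can each match at many text positions, and nothing forces the relative shift to be unique or to take a fixed value across queries of the same scale. Your idea (ii), a shared $\Oh(n)$-space encoding across all $h$ shifts via persistence or fractional cascading, is stated as a hope rather than a construction, and it is far from clear such a structure exists with $\Oh(\log^\varepsilon n)$-per-output reporting.

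The paper avoids the shift problem with a different decomposition. Rather than indexing by height, it indexes by pattern \emph{width} $w$, for every $w\in\{1,\dots,W\}$ (not just powers of two), handling only tall patterns with $h\ge w$ (the wide case is symmetric after transposition). For a fixed $w$, it reduces to one-dimensional matching over meta-characters (each a width-$w$ row fragment, with KMR-style identifiers shared across widths), and places cuts in the meta-character texts every $w$ positions. Since $h\ge w$, every occurrence spans a cut, and there are only $\Oh(n/w)$ cuts, so a single range-reporting structure over $\Oh(n/w)$ points (indexed by preorder ranks in two compacted tries over reversed prefixes and suffixes at the cuts) suffices; summing $\Oh(n/w)$ over all $w$ gives the $\Oh(n\log n)$ space bound. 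The query iterates over the $w$ possible anchor offsets of the cut inside the pattern, issuing up to $w$ range queries, at cost $\Oh(hw+(w+k)\log^\varepsilon n)$, which is absorbed into $\Oh(m+k\log^\varepsilon n)$ via $h\ge w$ together with a separate plain-suffix-tree index for $w\le\log n$. In effect the paper pays the ``many relative alignments'' price at query time over $w$ anchors --- affordable because it parameterizes on the smaller dimension --- instead of in space over $h$ shifts, which is exactly where your scheme blows up.
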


\noindent We also show that the structure from the above theorem can be constructed in $\tilde{\Oh}(n)$ time.

\paragraph{Computational model.}
In the above theorem we assume the standard word RAM model with words of logarithmic (in the size of the input) length. Basic arithmetic
operations on such words (and indirect addressing) are assumed to take constant time. Each character is assumed to fit in a single machine word.
We measure the size of our data structures in the number of words.

\section{Preliminaries}

\paragraph{One-dimensional strings.}
A (one-dimensional) string is a sequence of characters. We index positions in a string starting from~1.
For a string~$S$ of length~$n$, we write $S[i]$ to denote its $i$-th character and $S[i \dd j]$ to denote the substring spanning positions~$i$ through~$j$, inclusive.
If only one endpoint is provided, we interpret the interval as a prefix or suffix: $S[\dd i] := S[1 \dd i]$ is the prefix of length~$i$; $S[i \dd] := S[i \dd n]$ is the suffix starting at position~$i$.

\paragraph{Two-dimensional strings.}
We define two-dimensional strings as rectangular arrays of characters.
We refer to the total number of characters in a two-dimensional string as its \emph{size}.
Rows and columns are indexed starting from~1.
For a two-dimensional string $S$ we write $S[i]$ to denote its $i$-th row, interpreted as a one-dimensional string.
Consequently, $S[i][j]$ denotes the character in the $i$-th row and $j$-th column.

\paragraph{Meta-characters.}
In our algorithm, we reduce two-dimensional indexing to a collection of problems concerning one-dimensional strings.
This is done by treating fixed-length fragments of rows of the two-dimensional text as atomic symbols, which
we refer to as \emph{meta-characters}. Formally, for a two-dimensional string~$S$ and a fixed width~$w$,
a meta-character is a substring of the form $S[i][j \dd j + w - 1]$.
To effectively operate on such meta-characters, we will represent each of them by an integer from $[\poly(n)]$ that fits
inside a single machine word\footnote{It is in fact enough that it fits in a constant number of machine words.}, called
the identifier. The identifiers of two meta-characters will be different if and only if the meta-characters themselves are different.

\paragraph{Compacted trie.}
We use the standard compacted trie, also known as a compressed trie or Patricia trie, for storing a set
of strings. The strings consist of either characters or meta-characters. In either case, we assume that the symbols
fit in a single machine word. A compacted trie built for a set of $k$ strings is of size $\Oh(k)$, as we collapse
maximal paths consisting of nodes with exactly one child into a single edge, thus the number of inner nodes
is strictly smaller than the number of leaves. The remaining nodes are called explicit, while the dissolved nodes are called implicit.

\paragraph{Tools.}
We will need a few data structures. The first is called a deterministic dictionary.
\begin{theorem}[\cite{DBLP:journals/jacm/FredmanKS84}]\label{static dictionary}
Given a set $S$ of $n$ keys, we can build a dictionary structure of size $\Oh(n)$ that allows constant-time access to any element of $S$
(and its associated information).
\end{theorem}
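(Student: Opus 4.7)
The plan is to prove this via the classical two-level hashing scheme of Fredman, Komlós, and Szemerédi. Assume the keys come from a universe $U \subseteq [\poly(n)]$ (so each key fits in $\Oh(1)$ machine words), and fix a prime $p$ with $|U| < p < 2|U|$. We shall work with the universal hash family $h_{a,b}(x) = ((ax+b) \bmod p) \bmod m$ parametrised by $a \in [1 \dd p-1]$, $b \in [0 \dd p-1]$, and output size $m$. For this family, any two distinct keys collide with probability at most $1/m$ over the choice of $(a,b)$.

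First I would construct the top-level hash table. Pick $m = n$ and apply the universal family. The expected number of colliding pairs is at most $\binom{n}{2} \cdot 1/n < n/2$, so by Markov's inequality some pair $(a,b)$ yields fewer than $n$ colliding pairs; equivalently, denoting by $b_i$ the load of the $i$-th bucket, we have
\[
\sum_{i=1}^n \binom{b_i}{2} < n, \qquad \text{so} \qquad \sum_{i=1}^n b_i^2 < 3n.
\]
One locates such $(a,b)$ simply by trying elements of the family and measuring the sum of squared loads; since at least half the choices work, this takes expected constant number of trials, each of cost $\Oh(n)$.

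Next I would build, for each non-empty bucket $i$, a secondary hash table of size $m_i = 2 b_i^2$ using a fresh pair $(a_i,b_i)$ from the same family. The expected number of colliding pairs inside this bucket is at most $\binom{b_i}{2}/m_i < 1/4$, so by Markov at least a $3/4$ fraction of parameter choices produce a perfect hash function for the bucket. Sampling until success, we find one in expected constant trials at total cost $\Oh(b_i^2)$. The total space is
\[
\Oh\!\left(n + \sum_i b_i^2\right) = \Oh(n),
\]
and each lookup consists of two hash evaluations (constant time on the word RAM since $p = \Oh(\poly n)$ fits in $\Oh(1)$ words) followed by a single equality check, hence $\Oh(1)$ query time. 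The associated information is stored alongside each key in the secondary tables.

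The mildly delicate point is that the original theorem is stated without reference to randomisation, whereas my construction is Las Vegas with $\Oh(n)$ expected build time. To get a deterministic guarantee I would invoke the standard derandomisation: the universal family has size $\Oh(p^2) = \Oh(\poly n)$, and at each level the fraction of ``good'' parameter pairs is at least a constant, so one may enumerate the family and pick the first parameters meeting the required bound on $\sum b_i^2$ (resp.\ the no-collision condition), checkable in $\Oh(n)$ (resp.\ $\Oh(b_i^2)$) time; the overall deterministic preprocessing is polynomial, which suffices for the statement (only the size and query time are quantified).
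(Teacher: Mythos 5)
Your proof is correct: it is exactly the classical two-level (FKS) perfect-hashing construction of the cited reference, and the paper itself invokes this theorem as a black box without giving a proof, so there is nothing to diverge from. Just note that for the separate construction-time analysis (\Cref{thm:construction-time}) the paper relies on the stronger deterministic $\Oh(n(\log\log n)^2)$-time dictionary construction of Ru\v{z}i\'c rather than your polynomial-time derandomization, but for the statement as quantified here (size $\Oh(n)$ and constant query time) your argument fully suffices.
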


\noindent The second is a range reporting structure (we note that other trade-offs between the space and reporting time are possible, but we state
only one of them to avoid clutter).
\begin{theorem}[{\cite[Theorem 2.1]{Chan2011}}]\label{fact:chan}
Given a collection of~$n$ points in~$\{1, \dots, n\}^2$, we can build a structure of size $\Oh(n)$ that, given an axis-aligned rectangle,
reports all~$k$ points inside it in time~$\Oh((1 + k) \log^\varepsilon n)$, for any constant~$\varepsilon > 0$.
\end{theorem}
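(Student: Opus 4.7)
My plan couples a family of generalised compacted tries over meta-characters---one per power-of-two width---with Chan's two-dimensional range-reporting result (\Cref{fact:chan}). For every power of two $w' \in \{1, 2, 4, \ldots\}$ with $w' \le W$, I define the meta-character text $T_{w'}$ of size $H \times (W-w'+1)$ by letting $T_{w'}[i][j]$ be the identifier of $T[i][j \dd j+w'-1]$; using the suffix tree of the rows of $T$ together with the deterministic dictionary of \Cref{static dictionary}, all identifiers can be assigned in $\tilde{\Oh}(n)$ time. Viewing each of the $W-w'+1$ columns of $T_{w'}$ as a one-dimensional string of length $H$ over the meta-character alphabet, I build the generalised compacted trie $\mathcal{T}_{w'}$ of this collection of strings. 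Its total size is $\Oh(H(W-w'+1)) = \Oh(n)$, so summing across the $\Oh(\log n)$ powers of two bounds the total space by $\Oh(n \log n)$.

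To answer a query pattern $P$ of dimensions $h \times w$, I set $w'$ to the largest power of two with $w' \le w$ and put $\delta := w - w' \in [0, w')$. Because $w < 2w'$, the width-$w'$ prefix and width-$w'$ suffix of every pattern row overlap and together cover the row; translating each into its meta-character identifier produces two one-dimensional patterns $P^L$ and $P^R$ of length $h$, computable from $P$ in $\Oh(m)$ time. An occurrence of $P$ at $(i, j)$ in $T$ is then equivalent to the conjunction: in $\mathcal{T}_{w'}$, the suffix of column $j$ starting at row $i$ has $P^L$ as a length-$h$ prefix, \emph{and} the suffix of column $j+\delta$ starting at row $i$ has $P^R$ as a length-$h$ prefix. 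Descending $\mathcal{T}_{w'}$ along $P^L$ and $P^R$ takes $\Oh(h) \le \Oh(m)$ additional time and delivers two contiguous suffix-array intervals $[a,b]$ and $[c,d]$.

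The remaining task is to enumerate all $(j, i)$ whose rank in $\mathcal{T}_{w'}$ lies in $[a, b]$ and whose $\delta$-shifted partner $(j+\delta, i)$ has rank in $[c, d]$. I attach to each $\mathcal{T}_{w'}$ a range-reporting structure from \Cref{fact:chan} over $\Oh(n)$ carefully designed two-dimensional points that encode the ``same row, $\delta$-apart'' relation \emph{simultaneously for every admissible $\delta$}, so that a single rectangle query $[a,b] \times [c,d]$ returns the $k$ occurrences in $\Oh((1+k) \log^\varepsilon n)$ time, yielding the overall $\Oh(m + k \log^\varepsilon n)$ bound. The principal obstacle is precisely the design of this linear-sized point set: because $\delta$ is chosen only at query time and may take any integer value in $[0, w')$, a naive solution storing one point per $(j, i, \delta)$ triple would inflate the space to $\Theta(nw')$ per width and $\Theta(nW)$ overall, far exceeding the $\Oh(n \log n)$ budget. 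The crux of the proof is therefore a single linear-sized representation per $\mathcal{T}_{w'}$---I expect based on canonical same-row neighbour pairs of suffixes combined with an $\Oh(1)$-depth transformation of the query rectangle corners parameterised by $\delta$---that answers the shifted-intersection query for every admissible $\delta$ via just one call to \Cref{fact:chan}.
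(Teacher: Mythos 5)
There is a fundamental mismatch between what you were asked to prove and what you wrote. The statement in question, \Cref{fact:chan}, is not a result of this paper at all: it is Chan's orthogonal range reporting theorem, quoted verbatim from the cited work and used as a black box (the paper contains no proof of it, only the citation, and \Cref{sec:construction-time} even notes that its construction time is taken on faith from outside the paper). Your proposal never argues why $n$ points in $\{1,\dots,n\}^2$ admit an $\Oh(n)$-space structure with $\Oh((1+k)\log^\varepsilon n)$ reporting time; instead it \emph{assumes} exactly that and uses it as an ingredient in a sketch of the two-dimensional index, i.e., of something close to \Cref{main theorem}. So as a proof of the stated theorem it is vacuous: the statement appears as a hypothesis of your own argument, not as its conclusion.

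Even judged as a sketch of the main construction, the central step is missing rather than proved. You reduce an occurrence to the conjunction ``column $j$ matches $P^L$ and column $j+\delta$ matches $P^R$,'' where the shift $\delta\in[0,w')$ is only known at query time, and you then posit a single linear-sized point set per power-of-two width that answers this shifted-intersection query for \emph{every} $\delta$ with one rectangle query; you explicitly flag this as something you ``expect'' can be designed. That is precisely the hard part, and no construction is given; storing a point per $(j,i,\delta)$ blows up the space, and a fixed $\Oh(n)$-point set cannot encode a query-dependent pairing of columns by a mere corner transformation without further ideas. The paper avoids this issue altogether: it builds a separate structure for each exact width $w$ (not just powers of two), turning each $w$-column strip into a one-dimensional text over meta-characters, and in \Cref{main lemma} it places cuts every $w$ positions, stores the reversed prefixes and the suffixes at the cuts in two compacted tries, and issues at most $w$ range-reporting queries (one per anchor) per pattern, paying $\Oh(n/w)$ space per width and exploiting $h\ge w$ to absorb the $\Oh(w\log^\varepsilon n)$ query overhead into $\Oh(m)$.
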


\noindent Finally, we need to implement \emph{prefix search} on a compacted trie.
That is, given a query string, we traverse the trie to determine whether it occurs as a prefix of any stored string.
If so, we return the corresponding node (which may be implicit); otherwise, we report that no such prefix match exists.

\begin{lemma}\label{fact:compacted-trie}
A compacted trie storing $k$ strings uses $\Oh(k)$ additional space (on top of the strings themselves).
Assuming constant-time read-only random access to any character, prefix search for a query string of length $m$ takes $\Oh(m)$ time.
\end{lemma}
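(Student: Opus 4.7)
}
The plan is to argue the two claims separately, relying only on standard facts about compacted tries together with \Cref{static dictionary}. Throughout, I fix the convention that every edge label is represented implicitly by a triple $(s, a, b)$, meaning ``use characters $a$ through $b$ of stored string number $s$''; this costs $\Oh(1)$ words per edge on top of the input strings. I also assume that the stored strings are kept in a fixed global array so that any character can be fetched in constant time.

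For the space bound, I would first recall that because we collapse every path of single-child nodes, every internal node of a compacted trie storing $k$ strings has at least two children, so the number of explicit nodes is at most $2k-1$ and the number of edges is at most $2k-2$. Each edge contributes an $\Oh(1)$-size triple as above, so the explicit trie topology together with edge labels takes $\Oh(k)$ additional words. The only remaining ingredient that could blow up the space is the per-node structure used to map a character to the child whose edge begins with it. Here I would place at each explicit node $v$ a deterministic dictionary (\Cref{static dictionary}) whose keys are the first characters of the edges leading to children of $v$ and whose values are pointers to those children. Since $\sum_{v} \deg(v) = \Oh(k)$, the total extra space used by all per-node dictionaries is $\Oh(k)$, giving the claimed bound.

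For the time bound, prefix search proceeds by the usual top-down walk. At an explicit node $v$ we have consumed some prefix $P[\dd i]$ of the query; using the dictionary at $v$ we find in $\Oh(1)$ time the unique child (if any) whose edge label begins with $P[i+1]$, and if no such child exists we report failure. Once the child and the corresponding edge label $L$ (of length $\ell$, accessed via its triple) are identified, we scan $P[i+1 \dd i+\ell]$ against $L$ character by character in $\Oh(\ell)$ time using the assumed constant-time character access, either reaching the next explicit node, landing at an implicit node inside the edge (in which case we return it), or discovering a mismatch (in which case we report failure). Since each character of $P$ is charged once to either a dictionary lookup at some node or to a character comparison along some edge, and both cost $\Oh(1)$ per character, the total running time is $\Oh(m)$.

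The only genuinely nontrivial ingredient is the constant-time child selection at each node under a polynomially large alphabet, which would otherwise force a $\log$ factor. This is exactly what \Cref{static dictionary} buys us, and it is also the step one has to be careful about when arguing construction time later on; for the present lemma, however, the dictionaries are treated as given and the argument above is essentially bookkeeping.
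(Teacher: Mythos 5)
Your proposal is correct and follows essentially the same approach as the paper: store the outgoing edges of each explicit node in a deterministic dictionary (\Cref{static dictionary}) so that the total dictionary space is $\Oh(k)$, and during prefix search pay a constant-time dictionary lookup per explicit node plus a constant-time character comparison per edge character, giving $\Oh(m)$ total. You spell out the node/edge counting and the per-character charging argument in slightly more detail, but the underlying ideas are identical to the paper's.
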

\begin{proof}
To implement prefix search, we start at the root and descend following the appropriate edge. This requires storing
all the edges outgoing from an explicit node in a dictionary structure, which we implement with \Cref{static dictionary}.
Since the total degree over all explicit nodes is $\Oh(k)$, the combined space used by all dictionaries is $\Oh(k)$.
For implicit nodes, we only verify whether the subsequent characters of the query string are the same as on the edge,
using constant-time read-only random access to one of the stored strings. Then, the time per each character of the
query string is constant, so $\Oh(m)$ overall.
\end{proof}

\section{Two-Dimensional Index Construction}

In this section, we present our data structure for indexing a two-dimensional text and prove \Cref{main theorem}.
The idea behind our approach is to reduce the two-dimensional pattern matching problem to a collection of one-dimensional problems.
Each of these problems can then be solved efficiently by plugging in range reporting structures, as usual in the one-dimensional setting.
Throughout this section, we write~$T$ to denote the $H \times W$ input text and~$P$ to denote the $h \times w$ query pattern,
with~$n = HW$ and~$m = hw$ denoting their respective sizes.

The data structure consists of two symmetric components, designed to handle the query depending on whether the pattern is taller or wider.
Given an $h \times w$ pattern, we use one component if $h \ge w$ and the symmetric one (with rows and columns interchanged) otherwise.
We describe how to handle tall patterns, i.e., those with $h \ge w$, as the other case is fully symmetric.
To handle patterns of different widths, we store a separate index for each possible pattern width.
More precisely, for each width $w \in \{1, \dots, W\}$, we construct a dedicated data structure that supports searching for
patterns of fixed width $w$ and any height $h \ge w$.

For a fixed width $w$, we reduce the two-dimensional matching problem to a one-dimensional problem.
Specifically, we interpret the pattern as a one-dimensional string of length~$h$, where the $i$-th symbol is a \emph{meta-character} corresponding to the $i$-th row $P[i]$ of the pattern, that is, a contiguous block of~$w$ characters from that row.
As explained earlier, we will assign identifiers to these meta-characters, so that we can treat them as integers from $[\poly(n)]$.
We first explain how to ensure that we can access the identifier of any meta-character in the pattern and the text in constant time,
after $\Oh(n\log n)$ space preprocessing of the text, and then define the one-dimensional problem.

\paragraph{Encoding the text.}
The construction follows the classical Karp-Miller-Rosenberg (KMR) approach~\cite{KMR}, and is based on assigning integer identifiers to all fragments of the form $T[i][j \dd j + 2^k - 1]$, for all $k$ such that $2^k \le W$.
For each such fragment, we define its identifier as the lexicographic rank of its substring among all distinct substrings of the same length, and store this value. Then, the identifier of $T[i][j \dd j+w-1]$ consists of the identifiers of two
overlapping fragments of length~$2^{\lfloor \log w \rfloor}$ that together cover the whole $T[i][j \dd j+w-1]$.
Since there are $\Oh(n)$ fragments per length and $\Oh(\log W)$ relevant lengths, the total space required is~$\Oh(n \log n)$.

\paragraph{Encoding the pattern.}
To encode a pattern row~$P[i]$ of width~$w$, we represent the corresponding meta-character by the pair of identifiers of its prefix and suffix of length~$2^k$, where $k = \lfloor \log w \rfloor$.
To support this, during text preprocessing, we collect all distinct substrings $T[i][j \dd j + 2^k - 1]$ and store them in a compacted trie, where each leaf is labeled with the lexicographic rank among all substrings of $T$ of length~$2^k$.
We build a separate compacted trie for each relevant value of~$k$; since each trie requires~$\Oh(n)$ space and there are~$\Oh(\log W)$ values of~$k$, the total space used by all the compacted tries is~$\Oh(n \log n)$.
At query time, we extract the $2^k$-length prefix and suffix of~$P[i]$ and search for them in the corresponding compacted trie to obtain their identifiers.
If either substring is not found, we report that the pattern does not occur in the text. Otherwise, in $\Oh(m)$ time we obtain the sought identifier
for each row of the pattern.

\paragraph{One-dimensional problems.}
To complete the reduction, we must identify all occurrences of the one-dimensional meta-character pattern within the two-dimensional text.
To that end, we consider all windows of~$w$ consecutive columns.
Each such $w$-column strip defines a one-dimensional text of length~$H$, where the $j$-th character is a meta-character corresponding to the block of $w$ consecutive characters in the $j$-th row of the strip.
Formally, to define the one-dimensional texts corresponding to each $w$-column strip, we fix a horizontal offset~$i \in \{1, \dots, W - w + 1\}$ and define a one-dimensional text of length $H$ consisting of the meta-characters $T[j][i \dd i + w - 1]$ for all positions~$j\in\set{1, \dots, H}$.
This gives a collection of $W - w + 1$ one-dimensional texts over the same meta-character alphabet.
Each occurrence of the meta-character pattern in one of these one-dimensional texts corresponds one-to-one to a two-dimensional occurrence of the original pattern in the text.

In each of these one-dimensional instances, the derived meta-character pattern has length~$h \ge w$.
We exploit this assumption when designing a (simple) indexing structure in \Cref{main lemma proof}.

\begin{restatable}{lemma}{restateMainLemma}\label{main lemma}
For a parameter $w$ and a collection of one-dimensional texts of total length $n$, there is a data structure requiring $\Oh(n / w)$ additional space (on top of the strings) that, given a one-dimensional pattern of length $h \ge w$, reports all $k$ occurrences of the pattern in the texts in time $\Oh(hw + (w + k) \log^\varepsilon n)$, for any constant $\varepsilon > 0$.
\end{restatable}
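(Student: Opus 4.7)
The plan is to exploit $h \ge w$ via a standard sampling trick: in each text I mark the positions that are multiples of $w$ as \emph{sampled}, which yields $\Oh(n/w)$ sampled positions in total. The key observation is that every occurrence of $P$ (of length $h \ge w$) must cover at least one sampled position $q$, and the smallest such $q$ satisfies $j := q - p \in \{0, \ldots, w-1\}$, where $p$ is the occurrence's starting position; since exactly one $j \in \{0, \ldots, w-1\}$ makes $p + j$ a multiple of $w$, each occurrence is uniquely represented by a pair $(q, j)$. Reporting all occurrences then reduces, for every offset $j \in \{0, \ldots, w-1\}$, to enumerating the sampled positions $q$ that jointly satisfy (i) $P[j+1 \dd h]$ is a prefix of the sampled suffix $T[q \dd]$, and (ii) $P[1 \dd j] = T[q-j \dd q-1]$.

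For the data structure I would build two compacted tries of size $\Oh(n/w)$: a \emph{forward} trie over the sampled suffixes $T[q \dd]$ and a \emph{backward} trie over the reverses of the left contexts $T[\dd q-1]$ of the sampled positions. Each trie uses $\Oh(n/w)$ additional space by \Cref{fact:compacted-trie} (edge labels are represented as pointers into the texts, read right-to-left in the backward trie), and every explicit node is annotated with the interval of leaf ranks in its subtree so that a prefix search immediately yields a leaf range. I would then associate with each sampled position $q$ a 2D point whose coordinates are its leaf ranks in the two tries, and build the range reporting structure of \Cref{fact:chan} on these $\Oh(n/w)$ points. The total additional space is $\Oh(n/w)$.

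To answer a query I iterate over $j = 0, \ldots, w-1$ and for each $j$ perform two prefix searches: for $P[j+1 \dd h]$ in the forward trie and for the reverse of $P[1 \dd j]$ in the backward trie, in combined time $\Oh(h - j) + \Oh(j) = \Oh(h)$ per offset and $\Oh(hw)$ in total. Each pair of searches yields either failure or a rectangle $[x_1, x_2] \times [y_1, y_2]$ of leaf indices, and a single query to \Cref{fact:chan} on this rectangle returns precisely the sampled positions $q$ satisfying both (i) and (ii), each yielding the occurrence at $p = q - j$ in the corresponding text. Writing $k_j$ for the number of occurrences reported at offset $j$, we have $\sum_{j} k_j = k$ by the uniqueness of $(q, j)$, so the combined cost of the range queries is $\Oh\bigl(\sum_{j} (1 + k_j) \log^\varepsilon n\bigr) = \Oh((w + k) \log^\varepsilon n)$, matching the target bound.

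The main obstacle is checking both the forward condition (i) and the backward condition (ii) simultaneously within the tight $\Oh(n/w)$ space budget; the point of introducing the backward trie is exactly this, as it turns the left-context check into a coordinate-range condition that the 2D reporting structure handles together with the forward condition, avoiding any per-candidate verification which could inflate the reporting time by a factor of up to $w$. A minor technicality is handling sampled positions near the start of a text, where the left context is shorter than $w$; this only shortens the string stored in the backward trie and does not affect correctness of either the prefix search or the leaf-range assignment.
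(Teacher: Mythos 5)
Your proposal is correct and matches the paper's own proof essentially step for step: the paper likewise places cuts at positions $\equiv 0 \pmod w$, builds a compacted trie over the (reversed) left parts and another over the right parts, maps each cut to a 2D point via leaf pre-order ranks, and for each of the $w$ anchor offsets performs two prefix searches followed by a single range-reporting query of \Cref{fact:chan}. The only cosmetic difference is a one-position shift in the convention for where the cut sits, and you spell out the uniqueness of the pair $(q,j)$ and the $\sum_j k_j = k$ accounting slightly more explicitly than the paper does.
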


We could apply \Cref{main lemma} for each width $w \in \{1, \dots, W\}$ to handle patterns of width $w$ and height $h\ge w$. The total additional
space is bounded by $\sum_{w = 1}^W \Oh(n / w) = \Oh(n \log n)$, matching the requirements of \Cref{main theorem}.
However, when $h$ is smaller than $\log^\varepsilon n$, it holds that $w \log^\varepsilon n$ in the query time is larger than $m$.
We thus handle the case of small $w$ separately as follows.

\begin{lemma}\label{lemma:suffix-tree}
For a collection of one-dimensional texts of total length~$n$, there is an $\Oh(n)$-space data structure that, given a one-dimensional pattern of any length~$h$, reports all~$k$ occurrences of the pattern in the texts in time~$\Oh(h + k)$.
\end{lemma}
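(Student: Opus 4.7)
The plan is to construct a generalized suffix tree for the collection, which is a textbook tool for this kind of problem. Concretely, I would concatenate all texts into a single sequence, inserting a distinct sentinel symbol (not occurring in any text) between every two consecutive texts, and build the suffix tree of the resulting string. Since each character fits in a machine word and the alphabet lies in $[\poly(n)]$, this tree can be constructed in $\Oh(n)$ space using a standard linear-time suffix tree algorithm for integer alphabets. The tree has $\Oh(n)$ leaves, one per starting position, and every explicit internal node has at least two children, so its total size is $\Oh(n)$.

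For a query pattern $P$ of length $h$, I would descend from the root with the prefix search procedure of \Cref{fact:compacted-trie}, which runs in $\Oh(h)$ time given constant-time random access to the stored texts. If the descent fails before consuming all of $P$, the pattern does not occur. Otherwise, it terminates at a (possibly implicit) locus $v$ whose subtree contains exactly the leaves corresponding to occurrences of $P$. Enumerating those leaves by depth-first traversal of the subtree of $v$ takes $\Oh(k)$ time, because a compacted trie with $k$ leaves has at most $k-1$ internal nodes, giving the claimed $\Oh(h+k)$ overall bound.

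The only subtlety I expect is handling suffixes of the concatenation that straddle a text boundary, which could \emph{a priori} contribute spurious occurrences. Choosing the sentinels from outside the range of ordinary text characters resolves this automatically: a query pattern contains no sentinel symbols, so the descent cannot proceed past a sentinel position and no cross-boundary occurrence is ever reported. If desired, one may further simply omit the leaves rooted at sentinel-starting suffixes during construction, which affects neither the space nor the query-time guarantees. Everything else is routine, and the statement follows.
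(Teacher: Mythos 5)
Your proposal is correct and matches the paper's proof in essence: both build a generalized suffix tree, i.e., a compacted trie over all suffixes of all texts with distinct delimiters/sentinels preventing cross-text matches, then answer a query by prefix search via \Cref{fact:compacted-trie} followed by reporting the leaves in the resulting subtree. The only cosmetic difference is that the paper phrases the construction as inserting each delimited suffix directly into the trie, whereas you phrase it as concatenating the texts with sentinels and taking suffixes of the concatenation, but these yield the same structure and bounds.
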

\begin{proof}
We store all suffixes of all input texts, each terminated with a distinct delimiter, in a compacted trie implemented with \Cref{fact:compacted-trie}.
The total number of suffixes is $\Oh(n)$, so the total required space is $\Oh(n)$.
Querying for a pattern of length $h$ is implemented by performing a prefix search for the pattern in the trie and reporting all leaves in
the corresponding subtree.
\end{proof}

We apply \Cref{lemma:suffix-tree} for every width $w\leq \log n$, and otherwise use \Cref{main lemma}.
The total space is still $\Oh(n\log n)$.
To bound the query time, we observe that whenever we use \Cref{lemma:suffix-tree} the query time is $\Oh(h+k)=\Oh(m+k)$,
and whenever we use \Cref{main lemma} we can assume $w>\log n$ so the query time is $\Oh(hw+(w+k)\log^{\epsilon}n)=
\Oh(m+k\log^{\epsilon}n)$.

\section{One-Dimensional Index for Long Patterns}
\label{main lemma proof}

In this section, we prove \Cref{main lemma} by designing an indexing structure for the one-dimensional setting that takes advantage of the
assumption that the pattern is sufficiently long. We briefly comment that such a setup can be seen as related to locally consistent anchors,
and such an assumption has been used in the literature~\cite{DBLP:journals/pvldb/AyadLP23}. In our case, a simple and self-contained
approach based on defining some compacted tries and storing a range reporting structure is enough, though.
The idea of using range searching to solve indexing is, of course, quite common in the literature~\cite{DBLP:conf/birthday/Lewenstein13}.

\paragraph{Setup.}
We assume that the input texts allow constant-time read-only random access to the individual characters, and each of them
fits in a single machine word. Throughout this section, we fix a parameter~$w$, corresponding to the lower bound on the length of the pattern.
Texts shorter than $w$ can be discarded, as they cannot contain an occurrence of a pattern of length~$h \ge w$.

\paragraph{Preprocessing.}
For each text~$T$, we define a collection of \emph{cuts}: positions between characters spaced at regular intervals of length~$w$.
Formally, we insert a cut at every position~$i$ such that~$i \equiv 0 \pmod w$, including~$i = 0$.
Each cut partitions~$T$ into a prefix~$T_1 = T[\dd i]$ and a suffix~$T_2 = T[i + 1 \dd]$, where either may be empty.
We associate the cut with the pair~$(T_1, T_2)$, which serves as its representation.
We organize the collection of cuts using a two-dimensional range reporting structure. Let~$\mathcal{T}_1$ denote the set of all prefixes~$T_1$ reversed, and let~$\mathcal{T}_2$ denote the set of all suffixes~$T_2$ extracted from the cuts. We build two compacted tries:
\begin{itemize}
    \item $\mathcal{S}_1$, storing the strings in~$\mathcal{T}_1$,
    \item $\mathcal{S}_2$, storing the strings in~$\mathcal{T}_2$.
\end{itemize}
To ensure that each string from $\mathcal{T}_1$ and from $\mathcal{T}_2$ corresponds to a unique leaf in the respective trie, we
conceptually prepend and append a distinct delimiter character to each prefix and suffix, respectively.

Next, we assign a pre-order number to each leaf of~$\mathcal{S}_1$ and~$\mathcal{S}_2$ via a depth-first traversal.
Each cut then defines a point~$(x, y) \in \mathbb{Z}^2$, where:
\begin{itemize}
    \item $x$ is the pre-order number of the leaf in~$\mathcal{S}_1$ corresponding to the reversal of~$T_1$,
    \item $y$ is the pre-order number of the leaf in~$\mathcal{S}_2$ corresponding to~$T_2$.
\end{itemize}
This yields a collection of~$\Oh(n / w)$ such points, one per cut. We store these points in an instance of \Cref{fact:chan}.

\paragraph{Answering a query.}
Given a pattern~$P$ of length~$h \ge w$, our goal is to report all of its occurrences in the input texts.
Since the pattern has length at least~$w$, any occurrence must span at least one cut position from the collection defined for the texts.
We iterate over all positions at which such a cut could intersect an occurrence of the pattern.
We refer to each such position within the pattern as an \emph{anchor}, and consider
only the first $w$ possible anchors, to ensure that every occurrence of~$P$ is reported exactly once.

Specifically, we consider all positions~$j \in \{0, 1, \dots, w - 1\}$ where the pattern may be anchored.
Each such position induces a partition of~$P$ into a prefix~$P_1 = P[\dd j]$ and a suffix~$P_2 = P[j + 1 \dd]$.
The task is to find and report all cuts in the texts that partition some text into~$(T_1, T_2)$ such that~$P_1$ is a suffix of~$T_1$ and~$P_2$ is a prefix of~$T_2$.
This is implemented as follows.
First, we extract the corresponding prefix-suffix pair~$(P_1, P_2)$ from the pattern.
We then locate the (possibly implicit) node~$v_1$ in~$\mathcal{S}_1$ corresponding to the reversal of~$P_1$, and the node~$v_2$ in~$\mathcal{S}_2$ corresponding to~$P_2$.
If either node does not exist, the current position cannot yield any occurrences and is skipped.
Otherwise, let~$[\ell_1, r_1]$ and~$[\ell_2, r_2]$ be the ranges of pre-order numbers of leaves in the subtrees of~$v_1$ and~$v_2$, respectively.

The problem now reduces to a two-dimensional orthogonal range reporting query: report all stored points~$(x, y)$ that lie within the rectangle~$[\ell_1, r_1] \times [\ell_2, r_2]$.
Each reported point corresponds to a valid cut that certifies an occurrence of the pattern.
Querying the instance of \Cref{fact:chan} yields the query bound claimed in \Cref{main lemma}.

\section{Construction Time}\label{sec:construction-time}

We briefly discuss the preprocessing time needed to build the index from \Cref{main theorem}.
Recall that we use two external components as black boxes:
\begin{itemize}
	\item The deterministic dictionaries of \Cref{static dictionary}, which for a set of $n$ keys can be constructed in $\Oh(n(\log\log n)^2)$ time \cite{ruzic2008dictionaries}.
	\item The orthogonal range reporting structure of \Cref{fact:chan}, the construction time of which is not stated in the original description of \cite{Chan2011}.
		We claim that given $n$ points, it can be built in $\tilde{\Oh}(n)$ time, which suffices for our purposes.
\end{itemize}

Because the exact logarithmic factor depends on the two black-box components mentioned above, and to keep the description less technical, we state the bound in the $\tilde{\Oh}(\cdot)$ notation.

\begin{theorem}\label{thm:construction-time}
For a two-dimensional text of size $n$, the data structure of \Cref{main theorem} can be constructed in $\tilde{\Oh}(n)$ time.
\end{theorem}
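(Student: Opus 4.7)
The index decomposes into four layers: the 1D Karp--Miller--Rosenberg (KMR) identifiers, the per-length compacted tries used for meta-character encoding, the generalized suffix trees of \Cref{lemma:suffix-tree} for every $w \le \log n$, and the structures of \Cref{main lemma} for every $w > \log n$. The plan is to build each layer in $\tilde{\Oh}(n)$ total time; the critical point is that the last layer must telescope into a harmonic sum $\sum_{w} \tilde{\Oh}(n/w) = \tilde{\Oh}(n)$.

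First, the 1D KMR on the rows of $T$ is built by the classical doubling procedure with radix sort in $\Oh(n\log n)$ time and space. Simultaneously, for each length $2^k$ we build the compacted trie of distinct row fragments: since KMR already provides lexicographic ranks, we sort by rank, deduplicate, and insert into the trie in sorted order, using lower-level KMR identifiers as an $\Oh(\log n)$-time LCP oracle to determine the length of each branching edge. The deterministic dictionaries needed at the explicit nodes are built with~\cite{ruzic2008dictionaries} in $\Oh(d(\log\log d)^2)$ time on a node of degree $d$, so the whole encoding layer fits in $\tilde{\Oh}(n)$.

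Next, we compute a 2D KMR on~$T$, producing identifiers for every aligned $2^a \times 2^b$ submatrix by doubling in both dimensions, in $\tilde{\Oh}(n)$ time. This yields $\Oh(1)$-time equality testing for any two axis-aligned submatrices of $T$, and hence, by binary search, an $\Oh(\log n)$-time oracle for the row-wise LCP of any two meta-character sequences sharing a common width $w$ but otherwise with arbitrary starting rows and column offsets. For every $w \in (\log n, W]$ we then instantiate \Cref{main lemma}: we enumerate the $\Oh(n/w)$ cuts (without materialising the meta-character sequences, since individual meta-characters are accessible via the 1D KMR), sort the cut prefixes and cut suffixes using the 2D KMR as a comparison oracle in $\Oh((n/w) \log^2 n)$ time, and build $\mathcal{S}_1$ and $\mathcal{S}_2$ from the sorted orders and their LCPs. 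The pre-order labellings then yield the $\Oh(n/w)$ points of \Cref{fact:chan}, which we feed into the range reporting structure whose construction (claimed at the beginning of this section) takes $\tilde{\Oh}(n/w)$ time. Summed over~$w$, this part costs $\tilde{\Oh}(n)$.

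For $w \le \log n$ we instead build the generalized suffix tree of \Cref{lemma:suffix-tree} on the integer-alphabet meta-character sequences for that width in $\Oh(n)$ time each, giving $\Oh(n\log n)$ over the $\Oh(\log n)$ relevant widths; the symmetric component handling wide patterns is constructed in the same way with rows and columns swapped. The main obstacle is enabling LCP queries uniformly across all widths: without a global oracle we would spend $\Oh(n)$ per width on suffix sorting, blowing up to $\Oh(nW)$ in total. The 2D KMR resolves this in $\tilde{\Oh}(n)$ preprocessing, after which the per-width routines are essentially classical suffix-sorting and fit within the harmonic budget.
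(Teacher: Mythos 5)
Your proposal is correct, and it follows the same high-level decomposition as the paper: 1D KMR identifiers for the meta-character encoding, compacted tries built by inserting strings in sorted order using an LCP oracle to locate branching depths (this is exactly the paper's Lemma~\ref{lemma:trie-construction}), and range-reporting structures per width summing to a harmonic $\sum_w \tilde{\Oh}(n/w)=\tilde{\Oh}(n)$. The one place where you genuinely diverge is the choice of LCP oracle for the meta-character strings of arbitrary width~$w$. The paper concatenates, for each $k$, the one-dimensional strings induced by all $2^k$-column strips and builds a suffix-array/RMQ LCE structure on each concatenation (Lemma~\ref{lemma:lce-construction}); an LCE query on a $w$-strip is then reduced to the minimum of two LCE queries on the two overlapping $2^{\lfloor\log w\rfloor}$-strips covering it, and a separate vertically-reversed copy handles the reversed prefixes in $\mathcal{S}_1$. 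You instead build a 2D KMR over all aligned $2^a\times 2^b$ submatrices, get $\Oh(1)$ equality testing for arbitrary rectangles by the usual four-rectangle covering, and obtain LCP by binary search in $\Oh(\log n)$ per query. Both yield $\tilde{\Oh}(1)$-time LCP and $\tilde{\Oh}(n)$ preprocessing; your 2D KMR uses $\Theta(n\log^2 n)$ auxiliary space during construction, a $\log$-factor more than the paper's per-$k$ oracles, but has the mild advantage that forward and reversed LCP queries come for free from the same structure, so no separate reversed oracle is needed. Either is a legitimate route to \Cref{thm:construction-time}.
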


In what follows we revisit the components of the two-dimensional index and describe their efficient construction, proving \Cref{thm:construction-time}.

As described previously, for all $k$ with $2^k \le W$ we assign identifiers to all substrings $T[i][j \dd j+2^k-1]$ using the standard doubling (Karp–Miller–Rosenberg) scheme with radix sorting \cite{KMR}.
This takes $\Oh(n\log n)$ time in total.
We then store all substrings in a compacted trie, one trie per $k$.
To construct these tries, as well as the tries $\mathcal{S}_1$ and $\mathcal{S}_2$ used for 1D indexing, we need two lemmas.
The first lemma is not new (it is essentially the same approach as the one used to construct a Cartesian tree in linear time),
but we present the proof for completeness.

\begin{lemma}\label{lemma:trie-construction}
	A compacted trie storing $k$ strings can be built in $\tilde{\Oh}(k)$ time, assuming that the longest common extension (LCE) of any two positions in the strings can be answered in $\tilde{\Oh}(1)$ time.
\end{lemma}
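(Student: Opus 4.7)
The plan is to build the compacted trie by processing the strings in lexicographic order while maintaining the current rightmost root-to-leaf path on a stack, mirroring the classical linear-time Cartesian tree construction.

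First, I would sort the $k$ input strings lexicographically. Each comparison between two strings reduces to one LCE query followed by a single character comparison, so an $O(k\log k)$-comparison sort (say, merge sort) runs in $\tilde{\Oh}(k)$ time. Let $s_1,\ldots,s_k$ denote the strings in sorted order. I would then precompute $\ell_i := \mathrm{LCP}(s_i, s_{i+1})$ for $i=1,\ldots,k-1$ with one LCE query each, again in $\tilde{\Oh}(k)$ time in total.

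Next, I would insert the strings one at a time in sorted order, always keeping the explicit nodes on the current rightmost path, annotated with their string depths, on a stack. To insert $s_{i+1}$, I pop nodes whose string depth exceeds $\ell_i$. If the new top has string depth exactly $\ell_i$, I attach a fresh leaf for $s_{i+1}$ as its child using the edge label $s_{i+1}[\ell_i+1\dd]$; otherwise the branching point lies strictly inside the edge between the new top and the node just popped, in which case I split that edge by inserting a new explicit node at string depth $\ell_i$ and hang the new leaf below it. Finally, I push the newly created explicit node (if any) and the new leaf onto the stack. Outgoing edges at explicit nodes are maintained in the dictionary from \Cref{static dictionary}, either rebuilt or updated incrementally as children are added.

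Correctness follows from the standard fact (underlying suffix-tree construction from a suffix array together with an LCP array) that the sorted order of the strings and the array of consecutive LCP values determine the shape of the compacted trie uniquely. For the running time, each node is pushed onto the stack exactly once and popped at most once, so the total stack work across all insertions is $\Oh(k)$; on top of this, every operation touches at most one dictionary at an explicit node, contributing only a polylogarithmic factor absorbed into $\tilde{\Oh}(\cdot)$. Together with the $\tilde{\Oh}(k)$ sorting and LCP precomputation, the overall construction time is $\tilde{\Oh}(k)$. The one point requiring a little care is the edge-splitting bookkeeping when the branching depth falls inside an edge, but it is an entirely local constant-time update given pointers to the two endpoints and constant-time random access to any string routed through that edge, so I do not anticipate this being a real obstacle.
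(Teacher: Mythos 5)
Your proposal matches the paper's proof: sort the strings lexicographically using LCE-based comparisons, then insert them left to right, using the LCP of consecutive strings to locate the attachment depth, with an amortized $\Oh(k)$ bound on the total bookkeeping (the paper phrases this as a traversal from the previous leaf upward charged to edges of the final trie, while you phrase it as stack pushes/pops over the rightmost path, but these are the same Cartesian-tree-style argument). Both routes are correct and essentially identical.
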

\begin{proof}
We start with lexicographically sorting the strings. This requires $\Oh(k\log k)$ comparisons, and each comparison can be implemented with
an LCE query and then retrieving the characters at the subsequent position in both strings in $\tilde{\Oh}(1)$ time. Thus, after $\tilde{\Oh}(k)$
time we have the strings in the lexicographically sorted order $s_{1},s_{2},\ldots,s_{k}$. Next, we iterate over $i=1,2,\ldots,k$ while maintaining
the compacted trie $T_{i}$ storing $s_{1},s_{2},\ldots,s_{i}$. The trie is initially empty. We observe that the trie $T_{i+1}$ can be obtained from
$T_{i}$ by inserting $s_{i+1}$ as the rightmost leaf (we assume that each string $s_{i}$ ends with a unique special character). This possibly
requires splitting an edge by creating a new explicit node, and then creating a new edge leading to the new leaf. By calculating the LCE of
$s_{i}$ and $s_{i+1}$ we know the depth at which we should attach the new edge. We traverse the path from the leaf corresponding to $s_{i}$
up to find either the explicit node to which we should attach the new edge or an edge that should be split by creating a new explicit node to
which we should attach the new edge. The traversal takes constant time per considered edge, as we can maintain the current depth. The crucial
observation is that the total time of all the traversals is only $\Oh(k)$, as each traversed edge corresponds to a unique edge of the resulting
compact trie $T_{k}$. Thus, the total time is as claimed.
\end{proof}

The second lemma is standard, e.g. by constructing the suffix array~\cite{DBLP:journals/jacm/KarkkainenSB06} and augmenting it with an RMQ data structure~\cite{DBLP:conf/cpm/FischerH06}.

\begin{lemma}\label{lemma:lce-construction}
	A string of length $n$ can be preprocessed in $\tilde{\Oh}(n)$ time to support LCE queries in $\tilde{\Oh}(1)$ time.
\end{lemma}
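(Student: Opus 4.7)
The plan is to follow the classical suffix-array-plus-RMQ recipe, verifying that each ingredient fits within the $\tilde{\Oh}(n)$ preprocessing and $\tilde{\Oh}(1)$ query budget. First I would construct the suffix array $\mathrm{SA}$ of the input string $S$, which lists the starting positions of suffixes in lexicographic order. On the word RAM with characters that fit in a machine word (and hence can be re-mapped to an alphabet of size at most $n$ by initial radix sort), this can be done in $\Oh(n)$ time via the DC3/skew algorithm of Kärkkäinen–Sanders–Burkhardt cited in the excerpt. Alongside, I would build the inverse permutation $\mathrm{ISA}$, so that for any position $i$ we can look up in constant time the lexicographic rank of the suffix $S[i\dd]$.

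Next I would build the LCP array: $\mathrm{LCP}[r]$ is the length of the longest common prefix of the two suffixes whose ranks are $r-1$ and $r$. This array can be computed in $\Oh(n)$ time from $S$, $\mathrm{SA}$, and $\mathrm{ISA}$ via Kasai's standard amortized scan. On top of $\mathrm{LCP}$, I would install the RMQ structure of Fischer–Heun (the reference already cited), which uses $\Oh(n)$ preprocessing time and answers range-minimum queries in $\Oh(1)$ time.

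To answer an LCE query on two positions $i$ and $j$, I would read $r_i = \mathrm{ISA}[i]$ and $r_j = \mathrm{ISA}[j]$, assume without loss of generality $r_i < r_j$, and return the minimum of $\mathrm{LCP}$ over the range $[r_i+1, r_j]$ obtained from the RMQ structure. Correctness is the standard fact that the LCP of two suffixes equals the minimum pairwise LCP along the sorted order between them. Every step is either $\Oh(n)$ or $\tilde{\Oh}(n)$, and every query reduces to a constant number of array lookups plus one $\Oh(1)$ RMQ call, which matches the stated bounds.

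There is no real obstacle here; the only subtlety worth flagging is that the lemma is stated with $\tilde{\Oh}$ rather than $\Oh$ precisely to absorb any alphabet-dependent factor incurred by the initial remapping of characters to $\{1,\dots,n\}$ (and any logarithmic overhead inside the cited constructions), so we do not need to fight for the sharpest constant-alphabet variant of suffix array construction.
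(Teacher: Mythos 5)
Your proof spells out exactly the construction the paper invokes: build the suffix array via Kärkkäinen--Sanders--Burkhardt, derive the LCP array (Kasai), and layer Fischer--Heun RMQ on top, answering an LCE query as a range minimum between the two ranks. This is the same route the paper cites, so the proposal is correct and essentially identical.
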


By \Cref{lemma:trie-construction}, to efficiently build a compacted trie storing all distinct substrings $T[i][j \dd j+2^k-1]$, it suffices to augment the row fragments with LCE support, which we do by concatenating all rows of $T$ and using \Cref{lemma:lce-construction}.

To efficiently build the compacted tries used in the 1D indexes of \Cref{main lemma,lemma:suffix-tree}, we also build an LCE oracle for the $w$-column strip texts (and their reversals).
We cannot apply \Cref{lemma:lce-construction} separately for every $w$, and instead only apply it on strips of width $2^{k}$ as follows.
For each $k$ and each starting column $i \in \{1,\dots,W-2^k+1\}$, the $2^k$-column strip $T[\cdot][i \dd i+2^k-1]$ induces a one-dimensional string of length $H$ over meta-characters.
We concatenate all such strings for a fixed $k$ and use \Cref{lemma:lce-construction}, enabling LCE queries between any two positions in any strings induced by $2^k$-column strips.
The total length per $k$ is $\Oh(n)$, so the total preprocessing time is $\tilde{\Oh}(n)$.
Next, to handle an LCE query between some positions from strings $S_{i}$ and $S_{i'}$ induced by $w$-column strips $T[\cdot][i \dd i + w - 1]$ and $T[\cdot][i' \dd i' + w - 1]$, for an arbitrary width $w$, we reduce it to two LCE queries on $2^{k}$-column strips as follows.
Let $k:=\lfloor \log w \rfloor$ and let $L_i, R_i$ denote the strings induced by $2^k$-column strips $T[\cdot][i \dd i + 2^k - 1]$ and $T[\cdot][i+w-2^k \dd i+w-1]$, covering $T[\cdot][i \dd i + w - 1]$. 
We define $L_{i'}$ and $R_{i'}$ analogously to cover $T[\cdot][i' \dd i' + w - 1]$.
Then, the length of the longest common prefix of $S_{i}[j\dd]$ and $S_{i'}[j'\dd]$ is the minimum of the length of the longest common prefix of $L_{i}[j\dd]$ and $L_{i'}[j'\dd]$ and the length of the longest common prefix of $R_{i}[j\dd]$ and $R_{i'}[j'\dd]$. Both lengths can be found with the already constructed oracle.

We apply the same preprocessing on the vertically reversed text to support the reversed-LCE queries used to construct tries storing reversed prefixes.
Adding the near-linear constructions of the dictionaries and the range-reporting structures yields the overall $\tilde{\Oh}(n)$ preprocessing time claimed in \Cref{thm:construction-time}.

\section{Acknowledgements}

We would like to thank Itai Boneh and Panagiotis Charalampopoulos for participating in the initial discussion.

\bibliographystyle{plainurl}
\bibliography{biblio}

\end{document}